\definecolor{sccolor}{RGB}{62, 150, 81}
\definecolor{spcolor}{RGB}{57, 106, 177}
\definecolor{dullmagenta}{rgb}{0.4,0,0.4}   
\definecolor{darkblue}{rgb}{0,0,0.4}
\newcommand{\opnorm}{\@ifstar\@opnorms\@opnorm}
\newcommand{\@opnorms}[1]{%
	\left|\mkern-1.5mu\left|\mkern-1.5mu\left|
	#1
	\right|\mkern-1.5mu\right|\mkern-1.5mu\right|
}
\newcommand{\@opnorm}[2][]{%
	\mathopen{#1|\mkern-1.5mu#1|\mkern-1.5mu#1|}
	#2
	\mathclose{#1|\mkern-1.5mu#1|\mkern-1.5mu#1|}
}
\let\mathbb=\mathds
\newcommandx{\eric}[2][1=]{\todo[inline, author={Eric}, linecolor=yellow,backgroundcolor=yellow!25,bordercolor=yellow,#1]{#2}}
\newcommandx{\ericside}[2][1=]{\todo[author={Eric}, linecolor=yellow,backgroundcolor=yellow!25,bordercolor=yellow,#1]{#2}}
\DeclareMathOperator{\tr}{Tr}
\DeclareMathOperator{\cN}{\mathcal{N}}
\DeclareMathOperator{\cU}{\mathcal{U}}
\newcommand{\hrightarrow}{\rightarrow}
\newcommand\restr[2]{{
  \left.\kern-\nulldelimiterspace 
  #1 
  \vphantom{\big|} 
  \right|_{#2} 
  }}
\newcommand{\wsigma}{\widetilde{\sigma}}
\newcommand{\bra}[1]{\langle #1 |}
\newcommand{\ket}[1]{| #1 \rangle}
\newcommand{\be}{{\mathbf e}}
\newcommand{\cT}{{\mathcal{T}}}
\newcommand{\cE}{{\mathcal{E}}}
\newcommand{\cD}{{\mathcal{D}}}
\newcommand{\id}{{\mathrm{id}}}
\newcommand{\cV}{{\mathcal{V}}}
\def\reff#1{(\ref{#1})}
\def\0{{\mathbf{0}}}
\def\1{{\mathbf{1}}}
\def\2{{\mathbf{2}}}
\def\3{{\mathbf{3}}}
\def\4{{\mathbf{4}}}
\def\5{{\mathbf{5}}}
\def\6{{\mathbf{6}}}
\def\7{{\mathbf{7}}}
\def\8{{\mathbf{8}}}
\def\9{{\mathbf{9}}}
\def\be{\begin{equation}}
\def\ee{\end{equation}}
\def\bea{\begin{eqnarray}}
\def\eea{\end{eqnarray}}
\def\reff#1{(\ref{#1})}
\newtheorem{theorem}{Theorem}
\newtheorem{proposition}[theorem]{Proposition}
\newtheorem{remark}[theorem]{Remark}  
\newcommand\qedsymbol{$\blacksquare$}
\newcommand\qed{\hfill\qedsymbol}
\newlength{\blank}
\newenvironment{proof-of}[1][{\hspace{-\blank}}]{{\medskip\noindent\textit{Proof~{#1}.\ }}}{\hfill\qedsymbol}
\newenvironment{proof}{{\medskip\noindent\textit{Proof.\ }}}{\hfill\qedsymbol}
\newcommand{\oR}{{\overline{R}}}
\newcommand{\oP}{{\overline{P}}}
\begin{document}

\title{Convexity and Operational Interpretation of the \protect\\
       Quantum Information Bottleneck Function}

\author{%
  \IEEEauthorblockN{Nilanjana Datta\IEEEauthorrefmark{1}}
  \IEEEauthorblockA{\IEEEauthorrefmark{1}%
                    DAMTP\\
                    Centre of Mathematical Sciences\\
                    University of Cambridge\\
                    Cambridge CB3 0WA, UK\\
                    Email: n.datta@statslab.cam.ac.uk}
  \and
  \IEEEauthorblockN{Christoph Hirche\IEEEauthorrefmark{2}}
  \IEEEauthorblockA{\IEEEauthorrefmark{2}%
                    Grup d'Informaci\'{o} Qu\`{a}ntica\\
                    Departament de F\'{\i}sica\\
                    Universitat Aut\`{o}noma de Barcelona\\
                    08193 Bellaterra (Barcelona), Spain\\
                    Email: christoph.hirche@uab.cat}
  \and
  \IEEEauthorblockN{Andreas Winter\IEEEauthorrefmark{2}\IEEEauthorrefmark{3}}
  \IEEEauthorblockA{\IEEEauthorrefmark{3}%
                    ICREA---Instituci\'o Catalana de\\
                    la Recerca i d'Estudis Avan\c{c}ats\\
                    Pg.~Lluis Companys, 23\\
                    08010 Barcelona, Spain\\
                    Email: andreas.winter@uab.cat}
}
 
\date{14 March 2019}
	
\maketitle 

\begin{abstract}
In classical information theory, the information bottleneck method (IBM) can be 
regarded as a method of lossy data compression which focusses on preserving 
meaningful (or relevant) information. As such it has of late gained a lot of 
attention, primarily for its applications in machine learning and neural networks. 
A quantum analogue of the IBM has recently been defined, and an attempt at 
providing an operational interpretation of the so-called quantum IB function 
as an optimal rate of an information-theoretic task, has recently been made 
by Salek \emph{et al.} 
The interpretation given by these authors is however incomplete, 
as its proof is based on the conjecture that the quantum IB function 
is convex. Our first contribution is the proof of this conjecture.

Secondly, the expression for the rate function involves certain 
entropic quantities which occur explicitly in the very definition of the 
underlying information-theoretic task, thus making the latter somewhat contrived. 
We overcome this drawback by pointing out an alternative operational interpretation 
of it as the optimal rate of a \emph{bona fide} information-theoretic task, namely 
that of quantum source coding with quantum side information at the decoder, 
which has recently been solved by Hsieh and Watanabe. 
We show that the quantum IB function characterizes the rate region of this task,

We similarly show that the related \emph{privacy funnel} function is concave 
(both in the classical and quantum case).  However, we comment that it is 
unlikely that the quantum privacy funnel function can characterize the 
optimal asymptotic rate of an information theoretic task, since even its 
classical version lacks a certain essential additivity property.
\end{abstract}



\section{Introduction} 
\label{sec:introduction}
Consider a given pair of random variables $(X,Y)$ with joint probability distribution
$p_{XY}$. In this paper, all random variables are considered to be discrete, taking 
values $x,y$ in finite alphabets ${\mathcal{X}}$ and ${\mathcal{Y}}$, respectively. 
Tishby \emph{et al.} \cite{tishbyIBM} introduced the notion of the meaningful or 
{\em{relevant information}} that $X$ provides about $Y$. They formalized this notion
as a constrained optimization problem of finding the optimal compression of $X$ 
(to a random variable $W$, say) which 
still retains maximum information about $Y$. The authors of \cite{tishbyIBM} named this problem {\em{Information Bottleneck}} since $W$ can be viewed as the result of squeezing the information that $X$ provides about $Y$ through a ``bottleneck''. The information bottleneck can be regarded as a problem of lossy data compression for a source defined by the random variable $X$, {{in presence of side information given by $Y$}}. The standard theory of lossy data compression introduced by Shannon \cite{shannon59} is {\em{rate distortion theory}}, which deals with the trade-off between the rate of lossy compression and the average distortion of the distorted signal (see also~\cite{berger71,cover}). The Information Bottleneck Method (IBM) can be considered as a generalization of this theory, in which the distortion measure between $X$ and $W$ is determined by the joint distribution $p_{XY}$. This method has found numerous applications, \emph{e.g.}~in investigating deep neural networks~\cite{tishby2015deep, shwartz2017opening}, video processing~\cite{hsu2006video}, clustering~\cite{slonim2000doc} and polar coding~\cite{StarkPolar}.

{{The constraint in the above-mentioned optimization problem, is given as a lower bound, say $I_Y$, on the mutual information, $I(Y;W) = H(Y) + H(W) - H(YW)$, since the latter is a measure of 
the information about $Y$ contained in $W$. Here $H(Y)$ denotes the Shannon entropy of $Y$, \emph{i.e.}~if $Y$ has a probability mass function $\{p(y)\}_{y \in {\mathcal Y}}$, where ${{\mathcal Y}}$ is a finite alphabet, then $H(Y)= - \sum_{y \in {\mathcal Y}}p(y) \log p(y)$. The {\em{rate function}} of the IBM, the so-called {\em{IB function}}, is a function of this bound and is given by 
\begin{align}\label{cl-ra-min}
R(I_Y) &= \min_{p(w|x)\atop{I(Y;W) \geq I_Y}} I(X';W), \quad {\hbox{for}} \ I_Y \geq 0,
\end{align}
where $X'=X$, and the minimization is over the set of conditional probabilities $\{p(w|x)\}$, with $w$ denoting values taken by the random variable $W$.

A dual quantity, which gives an expression for the information $I_Y$ 
as a function of the rate $R$, is given by
\begin{align}
  \label{cl-ra-max}
  I_Y(R) &= \max_{p(w|x)\atop{I(X';W) \leq R}} I(Y;W), \quad {\hbox{for}} \ R \geq 0.
\end{align}
It was shown in~\cite[Lemma~10]{GNT03} that the optimization problems in (\ref{cl-ra-min}) 
and (\ref{cl-ra-max}) are indeed dual to each other, meaning that $R(I_Y)$ and $I_Y(R)$ 
are equivalent quantities, in the sense that they define the same curve with 
switched axes for $0\leq I_Y \leq I(W;Y)$ and $0\leq R \leq R(I(W;Y))$.}
In other words, $R$ and $I_Y$ are functions inverse to each other.

As a matter of fact, in~\cite{WW1975}, the following closely related 
optimization problem was investigated:
\begin{align}
  \label{Eq:condEntrOpt}
  F(a) = \min_{\substack{ p(w|x) \\ H(X|W)\geq a}} H(Y|W),
\end{align}
where $H(Y|W):= H(YW) - H(W)$ is the conditional entropy.
It was furthermore shown that $F(a)$ is always convex. 
It can easily be seen that $I_Y(a) = H(Y) - F(H(X)-a)$;
it follows that $I_Y(R)$ is concave and $R(I_Y)$ is convex. 

\subsection*{Operational interpretation of the classical IB function}

An operational interpretation of the IB function is obtained by 
considering the function $\widehat F(a) = H(Y) - I_Y(a)$.
More precisely, its interpretation follows from that of $\widehat F(a)$ via 
the so-called Wyner-Ahlswede-K\"orner (WAK) problem~\cite{Wyner75, AK75}.  
The setting considered for the WAK problem is that of source coding 
with side information at the decoder. In this paper we will be 
concerned with a generalization of this task to the quantum setting,
so let us review the WAK problem briefly.

\begin{figure}[ht]
  \includegraphics[width=0.5\textwidth]{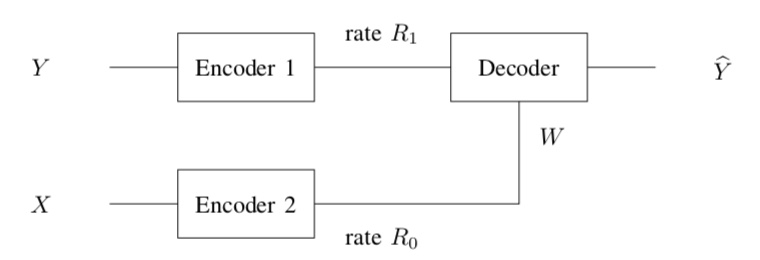}
  \caption{\small Schematic of the Wyner-Ahlswede-K\"orner problem.}
  \label{fig:WAK}
\end{figure}
    
The WAK problem concerns encoding information about one random variable such that it can be reconstructed using information about another (correlated) random variable. Let $X$ and $Y$ be two correlated random variables. One encodes $X$ and $Y$ separately, at rates $R_0$ and $R_1$, respectively. Both encodings are available to the decoder. A pair of rates $(R_0,R_1)$ is called achievable if it allows for exact reconstruction of $Y$ in the asymptotic i.i.d.~setting. Since we do not aim to recover $X$, its encoding is considered as side information at the decoder provided by a \textit{helper}. It was found independently in~\cite{Wyner75} and~\cite{AK75} that the minimal achievable rate $R_1$ under the constraint $R_0\leq a$ is given by $\widehat F(a)$.

\section{Quantum information bottleneck}
\label{sec:QIBM}
A quantum generalization of the information bottleneck was first proposed by 
Grimsmo and Still~\cite{Grimsmo16}. They considered the following problem: 
let $\rho_X$ denote the state of a quantum system $X$, and let $\psi_{XR}$ 
denote its purification. The purifying reference system, $R$, is sent 
through a quantum channel, \emph{i.e.}~a linear completely positive trace-preserving 
(CPTP) map ${\mathcal R}: R \to Y$. It is only the information in $Y$ 
which is deemed important or {\em{relevant}}. The aim is to 
find an optimal encoding (\emph{i.e.}~compression) ${\mathcal N}: X \to M$ of $X$ 
into a quantum ``memory'' system $M$, which enables the retention of as 
much information about $Y$ as possible, without storing any unnecessary data. 
They quantified the information encoded about the initial data via the 
quantum mutual information $I(M;R)$, and the information available in 
$M$ about $Y$ by the quantum mutual information $I(M;Y)$. The optimal 
encoding is then the solution over an optimization problem in which $I(M;Y)$ 
is maximized over all possible channels, such that $I(M;R)$ is below a given threshold. 

Formally, for a given bipartite quantum state $\rho_{XY}$, 
the {\em{quantum IB function}}, is defined
through the following constrained optimization problem:
\begin{align}
  \label{q-IBM}
  R_q(a) &= \inf_{\cN^{X\to W}\atop{I(Y;W)_\sigma \geq a}}
                    I(X';W)_{\tilde{\tau}}, \quad \text{for} \ a\geq 0,
\end{align}
where the optimization is over all linear CPTP maps $\cN^{X\to W}$ mapping 
states of $X$ to states of $W$, under the given constraint. In the above, 
${\tilde{\tau}}_{X'W} := ({\rm{id}}_{X'} \otimes \cN^{X\to W})\tau_{X'X}$, 
where $\tau_{X'X}$ is a purification of $\rho_X$, and 
$\sigma_{WY} := ( \cN^{X\to W} \otimes {\rm{id}}_{Y} )\rho_{XY}$. 
Hence $\rho_X = \tr_{X'}\tau_{X'X} = \tr_Y \rho_{XY}$.
Here, $I(Y;W)_\sigma:= S(\sigma_Y) + S(\sigma_W) - S(\sigma_{YW})$ denotes the 
quantum mutual information, with $S(\sigma_Y) := - \tr (\sigma_Y \log \sigma_Y)$ 
being the von Neumann entropy.

However, the operational significance of this task remained unclear.
Later, Salek \emph{et al.}~\cite{salek2017} attempted to give an operational 
interpretation to the quantum IB function.
They showed that it is the
optimal asymptotic rate of a certain information-theoretic task, {\em{under the assumption}}
that the quantum IB function is convex. The task that they considered was the following {\em{constrained
version}} of entanglement-assisted lossy data compression, in the communication paradigm, with a suitable choice of distortion measure. The state ($\rho_X$) to be compressed is in the possession of the sender (say, Alice), and  
is the reduced state of a bipartite state $\rho_{XY}$. Alice does not have
access to the system $Y$. There is a noiseless classical channel between the her and the receiver (say, Bob). Alice and Bob also have prior shared entanglement. The {\em{relevant information}} that the state of the quantum system $X$ provides about that of $Y$ is quantified by the quantum mutual information
$I(X;Y)_\rho$. Alice compresses $\rho_X$ and sends it through the noiseless classical channel to Bob, who then decompresses the data. Alice and Bob each use 
their share of entanglement in their respective compression and decompression tasks. The aim of the task is to find the optimal rate (in bits) of data compression under the constraint that the relevant information does not drop below a certain pre-assigned threshold.

We will complete Salek \emph{et al.}'s work by showing in the following
section that the quantum IB function is indeed convex, as they had conjectured.

At the same time, one might argue that the information-theoretic task considered 
is somewhat contrived, since it includes a constraint on an entropic function, 
namely a quantum mutual information, in its definition. Usually, the definition of 
an information-theoretic task is entirely operational, and the entropic 
quantities characterizing the optimal rates arise solely as a result of 
the computation. We will address this criticism by providing such an
interpretation in section \ref{sec:op}.

\section{Convexity of the QIB function}
\label{sec:convexity}
Our first result is the proof of the convexity of the quantum IB function, 
as conjectured in~\cite{salek2017}. To do so, we start with the observation that
the quantity $R_q(a)$ of Eq.~\eqref{q-IBM} can be expressed equivalently 
as follows:
\begin{align}
  \label{rq-min}
  R_q(a) &= \inf_{\cN^{X\to W}\atop{I(Y;W)_\sigma \geq a}} I(YR;W)_\sigma, \quad \text{for}\ a \geq 0. 
\end{align}
To see why, let $\psi_{XYR}$ be a purification of $\rho_{XY}$. 
Since it is also a purification of $\rho_X$, there must exist an 
isometry $\cV: X' \to YR$ such that
\begin{align}
  (\id_{X} \otimes \cV^{X' \to YR}) \tau_{XX'} = \psi_{XYR}.
\end{align}
Then, defining $\sigma_{WYR} := (\cN^{X \to W} \otimes \id_{YR}) \psi_{XYR}$, 
by the invariance of the mutual information under isometries, and the 
fact that ${\tilde{\tau}}_{W}:= \tr_{X'} {\tilde{\tau}}_{X'W} = \sigma_W$, 
with ${\tilde{\tau}}_{X'W}$ defined as in Eq.~(\ref{q-IBM}), we have  
\begin{align}
I(X';W)_{\tilde{\tau}} = I(YR;W)_\sigma.
\end{align}

The representation Eq. (\ref{rq-min}) has the benefit of referring to information quantities of the
same tripartite state (rather than two different ones), both in the 
objective function and the optimization constraint.

\begin{theorem}
\label{thm1}
The quantum IB function $R_q(a)$ defined through Eq. (\ref{q-IBM}), is convex, \emph{i.e.}
\begin{align}
  R_q(\lambda a_0 + (1-\lambda) a_1) & \leq \lambda R_q(a_0) + (1-\lambda) R_q(a_1),
\end{align}
for all $\lambda \in [0,1]$ and $a_0,\, a_1 \geq 0$.
\end{theorem}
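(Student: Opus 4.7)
The plan is to prove convexity by a standard time-sharing argument, made clean by working with the equivalent formulation in Eq.~\eqref{rq-min} so that the constraint and the objective refer to the same output state on a tripartite system.

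Fix $\lambda\in[0,1]$ and $a_0,a_1\geq 0$. Given $\epsilon>0$, pick channels $\cN_0^{X\to W_0}$ and $\cN_1^{X\to W_1}$ that are $\epsilon$-optimal for the problems defining $R_q(a_0)$ and $R_q(a_1)$: namely, letting $\sigma^{(i)}_{W_iYR}=(\cN_i\otimes\id_{YR})\psi_{XYR}$ for a fixed purification $\psi_{XYR}$ of $\rho_{XY}$, we have $I(Y;W_i)_{\sigma^{(i)}}\geq a_i$ and $I(YR;W_i)_{\sigma^{(i)}}\leq R_q(a_i)+\epsilon$. Introduce a classical flag register $F$ with orthonormal basis $\{\ket{0},\ket{1}\}$ and define the time-sharing channel $\cN^{X\to W F}$, where $W=W_0\oplus W_1$, by
\begin{equation}
  \cN(\xi) \;=\; \lambda\,\cN_0(\xi)\otimes\proj{0}_F \;+\; (1-\lambda)\,\cN_1(\xi)\otimes\proj{1}_F.
\end{equation}
Let $\sigma_{WFYR}=(\cN\otimes\id_{YR})\psi_{XYR}$.

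Next I would use the classicality of $F$ together with the chain rule for mutual information. Since $F$ is generated independently of the input, one has $\sigma_{YRF}=\psi_{YR}\otimes\sigma_F$, so $I(YR;F)_\sigma=0$ and a fortiori $I(Y;F)_\sigma=0$. Conditioning on the classical value of $F$ gives
\begin{align}
  I(Y;WF)_\sigma   &= I(Y;W\,|\,F)_\sigma   = \lambda\, I(Y;W_0)_{\sigma^{(0)}} + (1-\lambda)\,I(Y;W_1)_{\sigma^{(1)}},\\
  I(YR;WF)_\sigma &= I(YR;W\,|\,F)_\sigma = \lambda\, I(YR;W_0)_{\sigma^{(0)}} + (1-\lambda)\,I(YR;W_1)_{\sigma^{(1)}}.
\end{align}
The first identity shows that the constraint is met at level $\lambda a_0+(1-\lambda)a_1$, so $\cN$ is admissible in the definition of $R_q(\lambda a_0+(1-\lambda)a_1)$. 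The second gives
\begin{equation}
  R_q(\lambda a_0+(1-\lambda)a_1) \;\leq\; I(YR;WF)_\sigma \;\leq\; \lambda R_q(a_0)+(1-\lambda)R_q(a_1)+\epsilon.
\end{equation}
Letting $\epsilon\downarrow 0$ yields the claim.

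There is no serious obstacle: the only subtlety worth checking is that the time-sharing channel is genuinely of the form $\cN^{X\to W'}$ allowed by~\eqref{q-IBM} (take $W'=WF$, which is fine since the output system is unconstrained), and that the chain-rule split above is valid, which holds because $F$ is classical and independent of $X$ (so $I(YR;F)=0$). The conceptual point that makes the argument painless is the reformulation~\eqref{rq-min}: both the constraint and the objective become mutual informations of the single state $\sigma_{WFYR}$ with respect to the same classical conditioning system $F$, so the convex decomposition applies simultaneously to numerator and denominator of the trade-off.
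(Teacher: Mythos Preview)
Your proof is correct and follows essentially the same time-sharing argument as the paper: both use the reformulation~\eqref{rq-min} and construct a flagged channel so that the constraint and objective split as convex combinations over the classical flag. Your version is in fact slightly more careful, since you work with $\epsilon$-optimal channels rather than assuming the infimum is attained, but otherwise the two proofs are the same.
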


\begin{proof}
Let $\cN_0^{X \to W}$ and $\cN_1^{X \to W}$ be the optimizing channels in Eq. (\ref{rq-min})
for $a_0$ and $a_1$ respectively, such that 
$I(Y;W)_{\sigma_0} \geq a_0$ and  $I(Y;W)_{\sigma_1} \geq a_1$,
where for $i=0,1$,
\begin{align}
  \sigma_i &= (\cN_i^{X \to W} \otimes \id_{YR}) \psi_{XYR}. 
\end{align}
Hence $R_q(a_i) = I(YR;W)_{\sigma_i}$ for $i=0,1$.
Next consider the flagged channel $\cN^{X\to WW'}$, 
with a qubit $W'$, defined as follows: for $\lambda \in [0,1]$, let
\begin{align}
  \cN^{X\to WW'} &:= \lambda \cN_0 \otimes |0\rangle\!\langle 0|_{W'} 
                      + \lambda \cN_1 \otimes |1\rangle\!\langle 1|_{W'},
\end{align}
Then $\sigma_{YRWW'} := (\cN^{X\to WW'} \otimes \id_{YR})\psi_{XYR}$ is 
a block-diagonal state with diagonal blocks $\lambda \sigma_0^{WYR}$ and
$(1-\lambda) \sigma_1^{WYR}$ respectively. 
Thus,
\begin{align}
  I(Y;WW')_\sigma &=    \lambda I(Y;W)_{\sigma_0} + (1-\lambda) I(Y;W)_{\sigma_1}\nonumber\\
                  &\geq \lambda a_0 + (1-\lambda) a_1, \text{ and}                        \\
  I(YR;WW')_\sigma &= \lambda I(YR;W)_{\sigma_0} + (1-\lambda) I(YR;W)_{\sigma_1}\nonumber\\
                   &= \lambda R_q(a_0) + (1-\lambda) R_q(a_1).
\end{align} 
Therefore, 
\begin{align}
R_q(\lambda &a_0 + (1-\lambda) a_1) 
         =    \inf_{\cN^{X \to W}\atop{I(Y;W) \geq \lambda a_0 + (1-\lambda) a_1}} I(YR;W)\nonumber\\
        &\leq I(YR;WW')_\sigma
         =    \lambda R_q(a_0) + (1-\lambda) R_q(a_1),
\end{align}
concluding the proof.
\end{proof}

\medskip
This not only serves to complete the proof of the operational interpretation of 
the quantum IB function given in~\cite{salek2017}, but is also of independent interest.

\section{Operational interpretation}
\label{sec:op}
We now show that the quantum IB function precisely 
characterizes the achievable rate region of a \emph{bona fide} information 
theoretic task, namely, that of \emph{quantum source coding with quantum 
side information at the decoder} \cite{HsiehWatanabe},
described below and summarized in Theorem~\ref{thmOptRate}.


\subsection*{The task: quantum Wyner-Ahlswede-K\"orner problem}

Let us start by giving an explicit description of the task, which is
a quantum version of the WAK problem, following the work of Hsieh
and Watanabe~\cite{HsiehWatanabe}. It involves three 
parties Bob -- the sender (or encoder), Charlie -- the receiver (or decoder), 
and Alice -- the helper. In contrast to the classical setting, one furthermore 
allows for prior shared entanglement between the helper and the decoder.

Suppose a source provides Alice {(the helper)} and Bob {(the encoder)} 
with the $X$ and $Y$ parts of a quantum state 
$\psi_{X^nY^nR^n} = \left(\psi_{XYR}\right)^{\otimes n}$, respectively,  
with $R$ denoting an inaccessible, purifying reference system. 
Suppose, moreover, that Alice shares entanglement, given by 
the state $\Phi_{T_XT_C}$ with a third party, Charlie (the decoder), 
to whom she can send qubits via a system $C$, at a rate
$Q_X = \frac1n {\log |C|}$. 
Bob, on the other hand, can send qubits to Charlie, via a system ${\widetilde{C}}$, 
at a rate $Q_Y = \frac1n \log |\widetilde{C}|$. Collaborating together, 
Charlie's task is to decode $C$, $T_C$ and $\widetilde{C}$ to a high-fidelity 
approximation of $CT_CY^n$ in the asymptotic limit ($n\to \infty$). The 
encoding maps used by Alice and Bob, and the decoding map used by Charlie, 
are all linear CPTP maps.

\begin{figure}[ht]
  \includegraphics[width=0.5\textwidth]{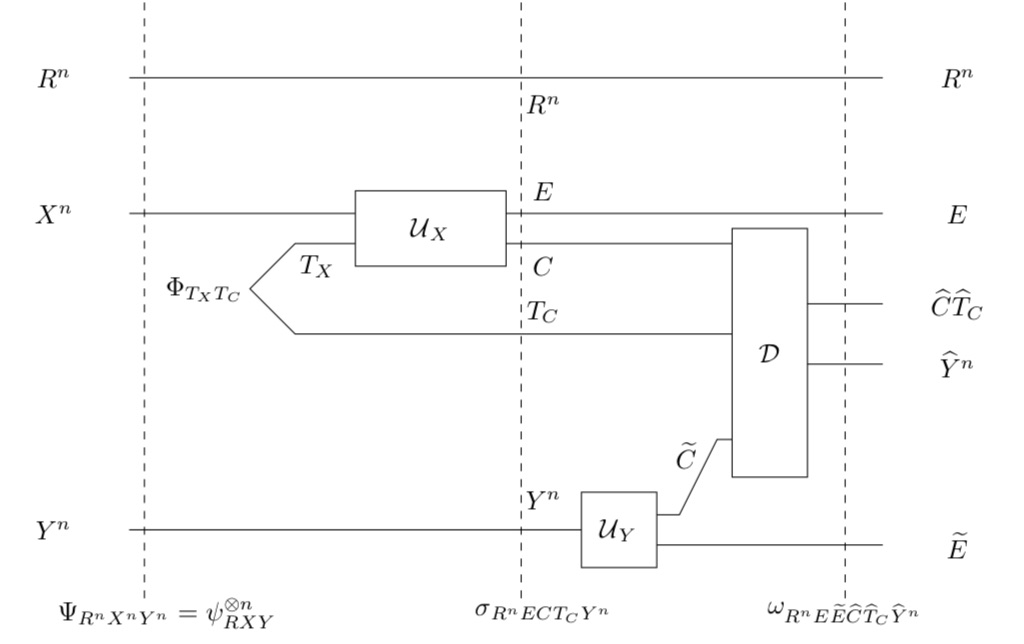}
  \caption{\small Schematic of the quantum source coding with side information task,
           the quantum version of the WAK problem.}
  \label{fig:quantum}
\end{figure}

In Fig.~\ref{fig:quantum}, we show a circuit diagram of the most general
protocol. 
Alice's encoding map $\cE_X:X^n T_X \to C$ has a Stinespring isometry 
$\cU_{X}: X^n T_X \hrightarrow CE$. Similarly, Bob's encoding map
$\cE_Y:Y^n \to \widetilde{C}$ has a Stinespring isometry 
$\cU_{Y}: Y^n \hrightarrow \widetilde{C}\widetilde{E}$. Finally, 
we denote Charlie's decoding map as 
$\cD:C T_C \widetilde{C} \to \widehat{C}\widehat{T}_C\widehat{Y}^n$; 
see Fig.~\ref{fig:quantum}.
The objective is to ensure that the fidelity of the protocol satisfies
\begin{align}\label{fid}
  F_n &:= \tr \left(\sigma_{R^n E C T_C Y^n}\omega_{R^nE\widehat{C}\widehat{T}_C\widehat{Y}^n}\right)
      \to 1 \text{ as } n \to \infty,
\end{align}
where 
\begin{align}\label{eq:sigma}
  \sigma_{R^n E C T_C Y^n}
    := \cU_{X} \otimes \id_{Y^nR^nT_C} (\psi_{X^nY^nR^n} \otimes  \Phi_{T_XT_C})
\end{align} 
is the overall pure state after Alice's encoding isometry, and 
\begin{align}\label{eq:omega}
  \omega&_{R^nE\widehat{C}\widehat{T}_C\widehat{Y}^n} \nonumber\\
        &:=  \cD\otimes\id_{ER^n}\bigl( \cU_X\otimes\cE_Y\otimes\id_{T_CR^n}
                                          (\psi_{X^nY^nR^n}\otimes\Phi_{T_XT_C}) \bigr), 
\end{align}
which corresponds to 
$\tr_{\widetilde{E}} \omega_{R^nE\widetilde{E}\widehat{C}\widehat{T}_{C}\widehat{Y}^n}$, 
with $\omega_{R^nE\widetilde{E}\widehat{C}\widehat{T}_{C}\widehat{Y}^n}$ the state
on the right in Fig.~\ref{fig:quantum}.
If there are encodings and decodings as above,
for which Eq.~(\ref{fid}) holds, \emph{i.e.}~the error incurred 
vanishes in the asymptotic limit, then we say that the
corresponding rate pair $(Q_X,Q_Y)$ is {\em{achievable}}. 

Observe that by definition and by the time sharing principle, the set
of achievable rate pairs for a given source is closed, convex and extends
to the above right of the $Q_X-Q_Y$ plane; see Fig.~\ref{Region}. 
Consequently, the achievable region in the plane is entirely described by its
left-lower boundary, the graph of a convex and monotonically non-increasing 
function.} 

\begin{figure}[ht]
  \centering
  \includegraphics[width=0.25\textwidth]{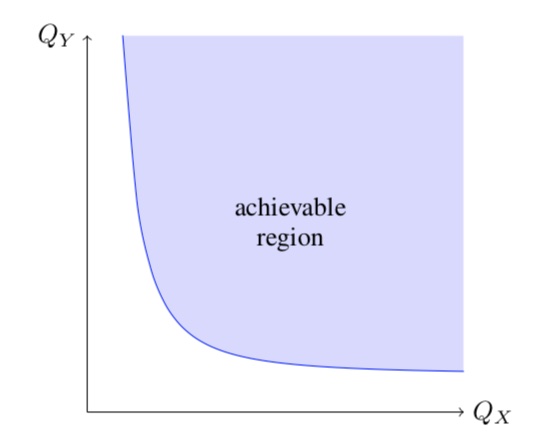}
  \caption{\small Schematic of the achievable rate region for the 
           assisted source coding task. The achievable rate pairs $(Q_X,Q_Y)$ 
           form a closed and convex set extending to the above right.}
  \label{Region}
\end{figure}

\begin{theorem}[Hsieh/Watanabe~{\cite[Thm.~7]{HsiehWatanabe}}]
\label{thmOptRate}
For a given state $\rho_{XY}$ with purification $\psi_{XYR}$, the rate 
pair $(Q_X,Q_Y)$ is achievable if and only if
\begin{equation}\label{eq-thm}
  Q_Y \geq \inf_{\cN^{X \to W}\atop{\frac12 I(W;YR)_\sigma \geq Q_X}} \frac12 I(Y;RV)_\sigma,
\end{equation}
where $\cU_{\cN}$ is the Stinespring isometry of $\cN^{X \to W}$
and $\sigma_{WVYR}:= (\cU_{\cN} \otimes \id_{YR}) \psi_{XYR}$
In other words, the rate pair $(Q_X,Q_Y)$ is achievable if and only if 
\begin{align}
  Q_Y \geq  H(Y) - \frac12 I^q_Y(2Q_X),
\end{align}
where $I^q_Y(R_q)$ is the inverse of $R_q(I^q_Y)$. 
\qed
\end{theorem}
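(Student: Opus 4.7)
The plan is to derive the ``in other words'' reformulation from the Hsieh--Watanabe condition by a short computation that exploits the purity of the intermediate state $\sigma_{WVYR}$. The Hsieh--Watanabe characterization~\eqref{eq-thm}, taken from~\cite{HsiehWatanabe}, will be used as a black box; the remaining task is to match its constrained optimization against the inverse of the quantum IB function $R_q$.

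Since $\psi_{XYR}$ is pure and the Stinespring dilation of $\cN^{X\to W}$ is an isometry, the state $\sigma_{WVYR}$ is itself pure. For any pure quadripartite state, complementary marginals have equal entropies, so $H(RV)_\sigma = H(WY)_\sigma$ and $H(YRV)_\sigma = H(W)_\sigma$. Expanding
\begin{align*}
  I(Y;RV)_\sigma &= H(Y) + H(RV)_\sigma - H(YRV)_\sigma, \\
  I(Y;W)_\sigma  &= H(Y) + H(W)_\sigma - H(WY)_\sigma,
\end{align*}
and summing the two lines yields the key identity
\begin{align*}
  I(Y;RV)_\sigma + I(Y;W)_\sigma = 2H(Y),
\end{align*}
equivalently $\tfrac12 I(Y;RV)_\sigma = H(Y) - \tfrac12 I(Y;W)_\sigma$. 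The symmetry $I(W;YR)_\sigma = I(YR;W)_\sigma$ will also be used implicitly.

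Substituting into the right-hand side of~\eqref{eq-thm} converts the constrained infimum into a constrained supremum:
\begin{align*}
  \inf_{\substack{\cN \\ \frac12 I(W;YR)_\sigma \leq Q_X}} \tfrac12 I(Y;RV)_\sigma
  = H(Y) - \sup_{\substack{\cN \\ I(YR;W)_\sigma \leq 2Q_X}} \tfrac12 I(Y;W)_\sigma.
\end{align*}
The supremum on the right is precisely $\tfrac12 I^q_Y(2Q_X)$ by the alternative form~\eqref{rq-min} of $R_q(a)$ together with the definition of $I^q_Y$ as its inverse, giving $Q_Y \geq H(Y) - \tfrac12 I^q_Y(2Q_X)$ as claimed. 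The convexity of $R_q$ established in Theorem~\ref{thm1} guarantees continuity and monotonicity on the relevant interval, so that the ``maximum over channels with $I(YR;W) \leq 2Q_X$'' really coincides with the pointwise inverse of $R_q$, and the supremum is attained on the boundary.

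The genuinely hard input---the achievability and converse for the quantum WAK problem---is imported wholesale from~\cite{HsiehWatanabe}. Given that, the main thing to watch in the reformulation is the bookkeeping: tracking which complement of the pure state $\sigma_{WVYR}$ appears in each entropy, keeping track of the factor $\tfrac12$ when converting quantum mutual informations into qubit rates in the entanglement-assisted setting, and ensuring that the direction of the rate constraint on $I(W;YR)_\sigma$ is the operationally correct one so that the extremization lines up with the rate-info inverse defined by $R_q$.
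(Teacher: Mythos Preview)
Your reformulation argument is correct: the pure-state identity $I(Y;RV)_\sigma + I(Y;W)_\sigma = 2H(Y)$ is exactly what connects the Hsieh--Watanabe objective to the quantum IB function, and you handle the inf/sup swap and the factor $\tfrac12$ cleanly. You also (silently) fix the direction of the constraint in~\eqref{eq-thm} to $\frac12 I(W;YR)_\sigma \leq Q_X$, which is indeed the operationally meaningful one.

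However, your approach and the paper's are essentially complementary rather than the same. You treat the Hsieh--Watanabe characterization as a black box and supply only the algebraic bridge to $I^q_Y$. The paper does the opposite: its appendix gives a full, self-contained proof of the operational statement---achievability via QRST (state splitting) composed with FQSW (coherent state merging), and a converse based on entropy bounds, Alicki--Fannes continuity, and a tensor-additivity lemma $\cT(\psi^{(1)}\otimes\psi^{(2)})=\cT(\psi^{(1)})+\cT(\psi^{(2)})$ for single-letterization---while leaving the ``in other words'' reformulation unproved. What the paper's route buys is independence from~\cite{HsiehWatanabe} and, via the additivity lemma, a slightly stronger by-product (the rate region of a product of two arbitrary sources is the Minkowski sum of the individual regions). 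What your route buys is brevity and an explicit justification of the second displayed formula in the theorem, which the paper never spells out. If you want your write-up to stand on its own, you would still need to supply the achievability/converse argument; conversely, your computation would be a welcome addition to the paper's appendix.
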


\medskip
The proof of the achievability part of the theorem is illustrated in 
Fig.~\ref{fig:quantum2}. It employs the following two basic protocols
as building blocks: 
the \emph{Quantum Reverse Shannon Theorem (QRST)~\cite{devetak06, abeyesinghe09}, 
aka state splitting}, and \emph{Fully quantum Slepian-Wolf (FQSW)~\cite{abeyesinghe09}, 
aka coherent state merging}.
For completeness, we provide the full proof in the appendix.

\begin{figure}[ht]
  \includegraphics[width=0.5\textwidth]{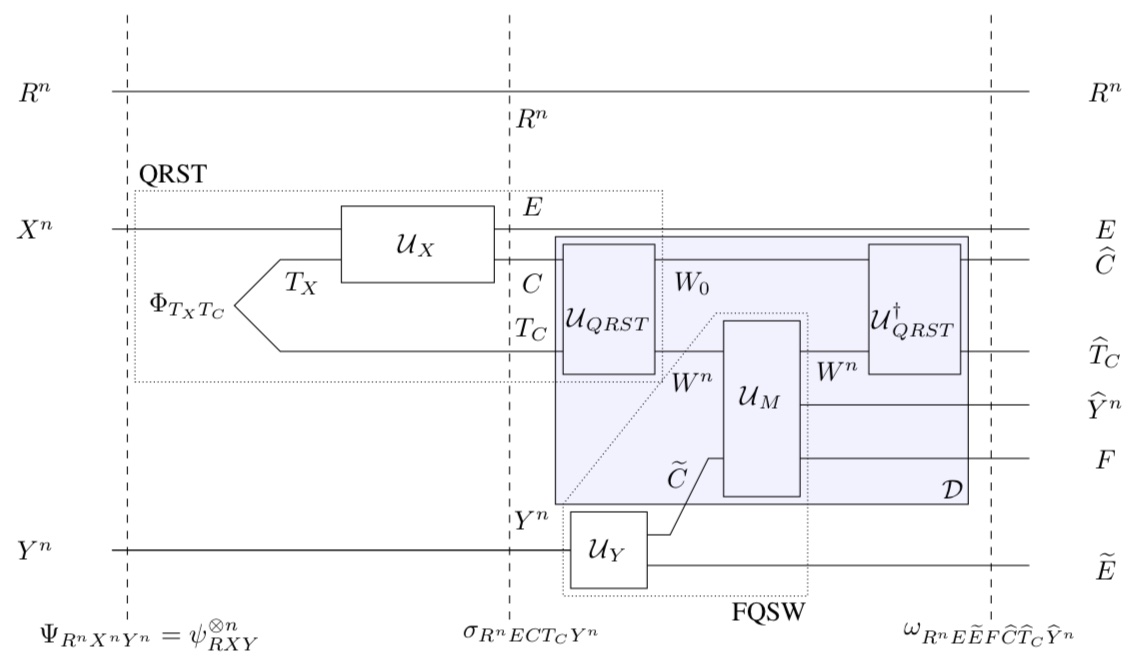}
  \caption{\small A schematic depiction of the protocol used to show achievability 
           in the proof of Theorem~\ref{thmOptRate}. The blue box shows the 
           different actions that compose the decoder ${\mathcal{D}}$. 
           The dotted boxes mark the implementations of the QRST protocol 
           and the FQSW protocol. In particular, $\cU_{QRST}:CT_C\to W^nW_0$ 
           and $\cU_{M}:W^n\widetilde{C}\to W^n \widehat{Y}^n F$ 
           denote the decoding unitary and isometry for QRST and FQSW, respectively.}
  \label{fig:quantum2}
\end{figure}

\section{Privacy funnel}
\label{sec:privacyfunnel}
We can also define a quantum generalization of the so-called 
{\em{privacy funnel function}}, which is closely related to the information bottleneck 
function. The concept of privacy funnel was first introduced in \cite{makhdoumi2014},
where the (classical) privacy funnel function is defined as
\begin{align}\label{pf0}
  G(t) &= \min_{p(w|x)\atop{I(X;W) \geq t}} I(Y;W), \quad \text{for}\ t \geq 0. 
\end{align}
As for the IB function, we can also give a dual function: 
\begin{align}\label{pf1}
  P(a) &= \max_{p(w|x)\atop{I(W;Y) \leq a}} I(X;W), \quad \text{for}\ a \geq 0. 
\end{align}

The underlying motivation can be described as follows: Consider a party who is 
in possession of two correlated sets of data, some public data, $X$, which he 
is willing to disclose, and some private data, $Y$, which she would like to 
keep confidential. A second party (usually called an \textit{analyst}), 
is granted access to all or parts of the public data, and could exploit the 
correlations between $X$ and $Y$ to infer information about the private data. 
The aim of the privacy funnel optimization is to minimize the private information 
leaked, while providing a sufficient amount of public information for the analyst to use.


In analogy to the information bottleneck, we can give a quantum version of the 
privacy funnel by considering the following quantity, cf. Eq.~\eqref{rq-min}:  
\begin{align}\label{pq-min}
  G_q(t) &= \inf_{\cN^{X\to W}\atop{I(YR;W)_\sigma \geq t}} I(Y;W)_\sigma, \quad \text{for}\ t \geq 0.
\end{align}
which again can be equivalently expressed in its dual form
\begin{align}\label{pq-min-dual}
  P_q(a) &= \sup_{\cN^{X\to W}\atop{I(Y;W)_\sigma \leq a}} I(YR;W)_\sigma, \quad \text{for}\ a \geq 0.
\end{align}

\begin{proposition}\label{pq-concavity}
The classical and quantum privacy funnel functions, $G(t)$ and $G_q(t)$, defined 
through Eqs.~\eqref{pf0} and~\eqref{pq-min} are convex, 
\emph{i.e.}~for all $\lambda \in [0,1]$ and $t_0, t_1 \geq 0$,
\begin{align}
  G(\lambda t_0 + (1-\lambda) t_1) &\leq \lambda G(t_0) + (1-\lambda) G(t_1), \\
  G_q(\lambda t_0 + (1-\lambda) t_1) &\leq \lambda G_q(t_0) + (1-\lambda) G_q(t_1).
\end{align}
\end{proposition}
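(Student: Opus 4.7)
The plan is to adapt the flagged-channel construction from the proof of Theorem~\ref{thm1} essentially verbatim. The argument there uses only the \emph{affine} behaviour of both $I(Y;W)$ and $I(YR;W)$ under convex combinations of channels carrying a classical flag, so it is indifferent to which of the two quantities plays the role of objective and which plays the role of constraint. Swapping those roles is exactly what distinguishes $G_q$ from $R_q$.

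For the quantum case, I would fix $\lambda \in [0,1]$, $t_0, t_1 \geq 0$, take channels $\cN_0^{X\to W}, \cN_1^{X\to W}$ that are $\varepsilon$-near optimizers for $G_q(t_0)$ and $G_q(t_1)$ respectively, and form the flagged channel
\begin{equation*}
  \cN^{X\to WW'} := \lambda\, \cN_0 \otimes \ketbra{0}{0}_{W'} + (1-\lambda)\, \cN_1 \otimes \ketbra{1}{1}_{W'},
\end{equation*}
with a qubit flag $W'$. Setting $\sigma_{YRWW'} := (\cN \otimes \id_{YR})\psi_{XYR}$, the block-diagonal structure in $W'$ makes the binary entropy of the flag cancel in every mutual information involving $WW'$, so exactly as in the proof of Theorem~\ref{thm1},
\begin{align*}
  I(YR;WW')_\sigma &= \lambda I(YR;W)_{\sigma_0} + (1-\lambda) I(YR;W)_{\sigma_1} \geq \lambda t_0 + (1-\lambda) t_1, \\
  I(Y;WW')_\sigma  &= \lambda I(Y;W)_{\sigma_0}  + (1-\lambda) I(Y;W)_{\sigma_1}  \leq \lambda G_q(t_0) + (1-\lambda) G_q(t_1) + \varepsilon.
\end{align*}
Hence $\cN$ is feasible for the infimum defining $G_q(\lambda t_0 + (1-\lambda) t_1)$ and witnesses the desired upper bound; letting $\varepsilon \downarrow 0$ concludes the quantum case.

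The classical claim follows by direct specialization: I would replace $\cN_0, \cN_1$ with stochastic kernels $p_0(w|x), p_1(w|x)$, let $W' \in \{0,1\}$, and mix via $p(w,w'|x) = \lambda p_0(w|x)\delta_{w',0} + (1-\lambda) p_1(w|x)\delta_{w',1}$. The Shannon analogue of the entropy identities above then yields the same inequality, now with the constraint $I(X;WW') \geq \lambda t_0 + (1-\lambda) t_1$ matching the form in~\eqref{pf0}.

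There is essentially no obstacle here: the whole mechanism is inherited from Theorem~\ref{thm1}, and it works because the flag $W'$ contributes the same $h(\lambda)$ correction to every entropy $S(\sigma_{AWW'})$ (classically $H(AWW')$), regardless of which subsystem $A$ is, so it cancels uniformly in every mutual information $I(A;WW')$. The only mild care needed is the potential non-attainment of the infimum in~\eqref{pq-min}, which is handled by the $\varepsilon$-approximation used above.
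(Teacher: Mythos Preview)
Your proposal is correct and follows essentially the same approach as the paper, which simply states that the quantum case follows immediately from the flagged-channel construction of Theorem~\ref{thm1} (and cites~\cite{du2017} for the classical case). Your use of $\varepsilon$-near optimizers to handle potential non-attainment of the infimum is a sensible refinement that the paper glosses over.
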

\begin{proof}
The classical case was previously proven in~\cite{du2017}. 
The proof of the quantum version follows immediately from the same 
approach as that of Theorem~\ref{thm1}. 
\end{proof}

\medskip
It would be interesting to find an operational interpretation of $G(t)$ or 
$G_q(t)$. Here, we will not attempt that, but only point out that it probably
will not work along similar lines as we have seen for the information
bottleneck function, \emph{i.e.} as rates in an asymptotic i.i.d.~setting. 
Indeed, in~\cite{du2017} it is shown that the classical privacy funnel 
function is convex and obeys the piecewise linear lower bound
\begin{align}
  G(t) \geq \max\{0,t-H(X|Y)\},
\end{align}
which is 0 up for t between 0 and $H(X|Y)$, and linear with slope
1 for t in the interval from $H(X|Y)$ to $H(X)$. It is also shown
that $G(t)$ in general is different from this lower bound, namely
even in the neighborhood of $t=0$ it is typically positive,
because in~\cite{du2017} it is shown that the derivative at $t=0$ is
typically positive. However, this is not the case for the 
privacy funnel function $G^{(n)}(t)$ of $X^nY^n$, as $n \rightarrow\infty$.

Indeed, we claim that
\begin{align}
  G^{(\infty)}(t) := \inf_n \frac1n G^{(n)}(nt) = \max \{0,t-H(X|Y)\}.
\end{align}
\begin{proof}
Note first that also $G^{(\infty)}(t)$ is convex, and that the
lower bound from~\cite{du2017} still applies, $G^{(\infty)}(t) \geq \max\{0,t-H(X|Y)\}$.
Hence, to show equality, it will be enough to prove that 
$G^{(\infty)}(H(X|Y)) = 0$. This follows from privacy amplification
by random hashing~\cite{bennett1995} of $X^n$ with the eavesdropper's information $Y^n$: 
It is possible to extract $W$ as a deterministic function of $X^n$,
taking values in $\{0,1\}^{nR}$, such that $R$ converges to $H(X|Y)$, and at
the same time $I(W;Y^n)$ goes to $0$.
\end{proof}

\medskip
Since information theoretic interpretations tend to address this
i.i.d.~limit, it seems unlikely that $G(t)$, rather than $G^{(\infty)}(t)$,
can be interpreted in this vein. By analogy, we suspect that
$G_q(t)$ has the same issues of non-additivity, but leave a thorough
discussion of it to another occasion.


\section{Numerics and examples}
\label{Numerics}
In this section we discuss some examples in order to give an intuition for the 
information bottleneck and privacy funnel functions and their properties. For better 
comparison, we choose to normalize the IB function in the following way 
(as is done in \cite{salek2017}): 
\begin{align}\label{nq-IBM}
\oR_q(a) &= \inf_{\cN^{X\to W}\atop{\frac{I(Y;W)_\rho}{I(Y;X)_\rho} \geq a}} \frac{I(X';W)_{\tilde\tau}}{I(X';X)_\tau}, \quad \text{for} \ a\geq 0.
\end{align}
As before, 
\begin{align}\label{taun}
{\tilde{\tau}}_{X'W} &:= ({\rm{id}}_{X'} \otimes \cN^{X\to W})\tau_{X'X},
\end{align}
where $\tau_{X'X}$ is a purification of $\rho_X$, and $\sigma_{WY} := ( \cN^{X\to W} \otimes {\rm{id}}_{Y} )\rho_{XY}$.
This choice of normalization is simply motivated by the data-processing inequality for the mutual information. 

\begin{figure}[t!]
\centering
\begin{overpic}[trim=6cm 10cm 5.7cm 10cm, clip, scale=0.8]{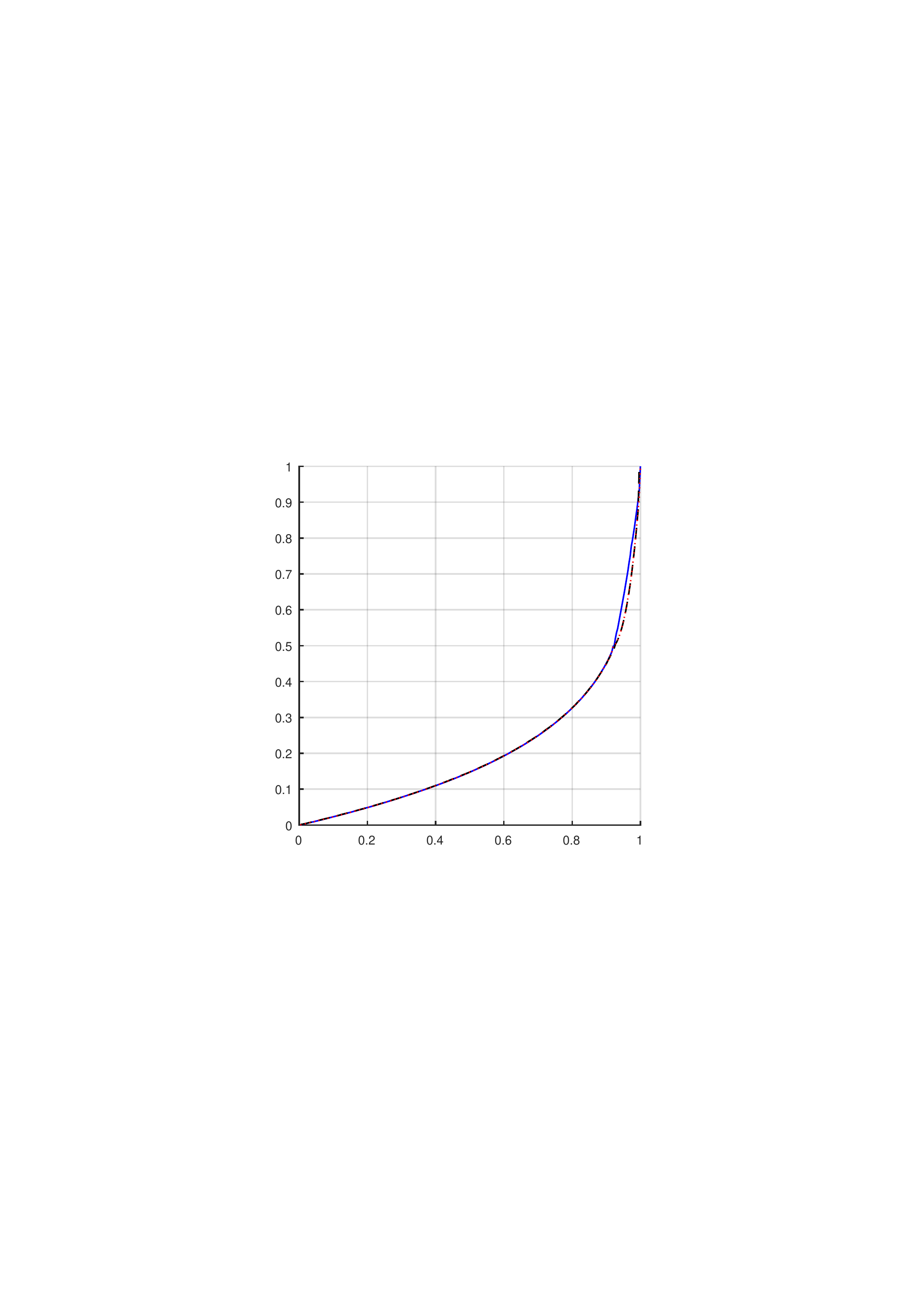}   
\put (-14,213) {$\oR_q(a)$}
\put (108,0) {$a$}
\end{overpic}
\begin{overpic}[trim=0cm 0.15cm 19cm 10.9cm, clip, scale=0.8]{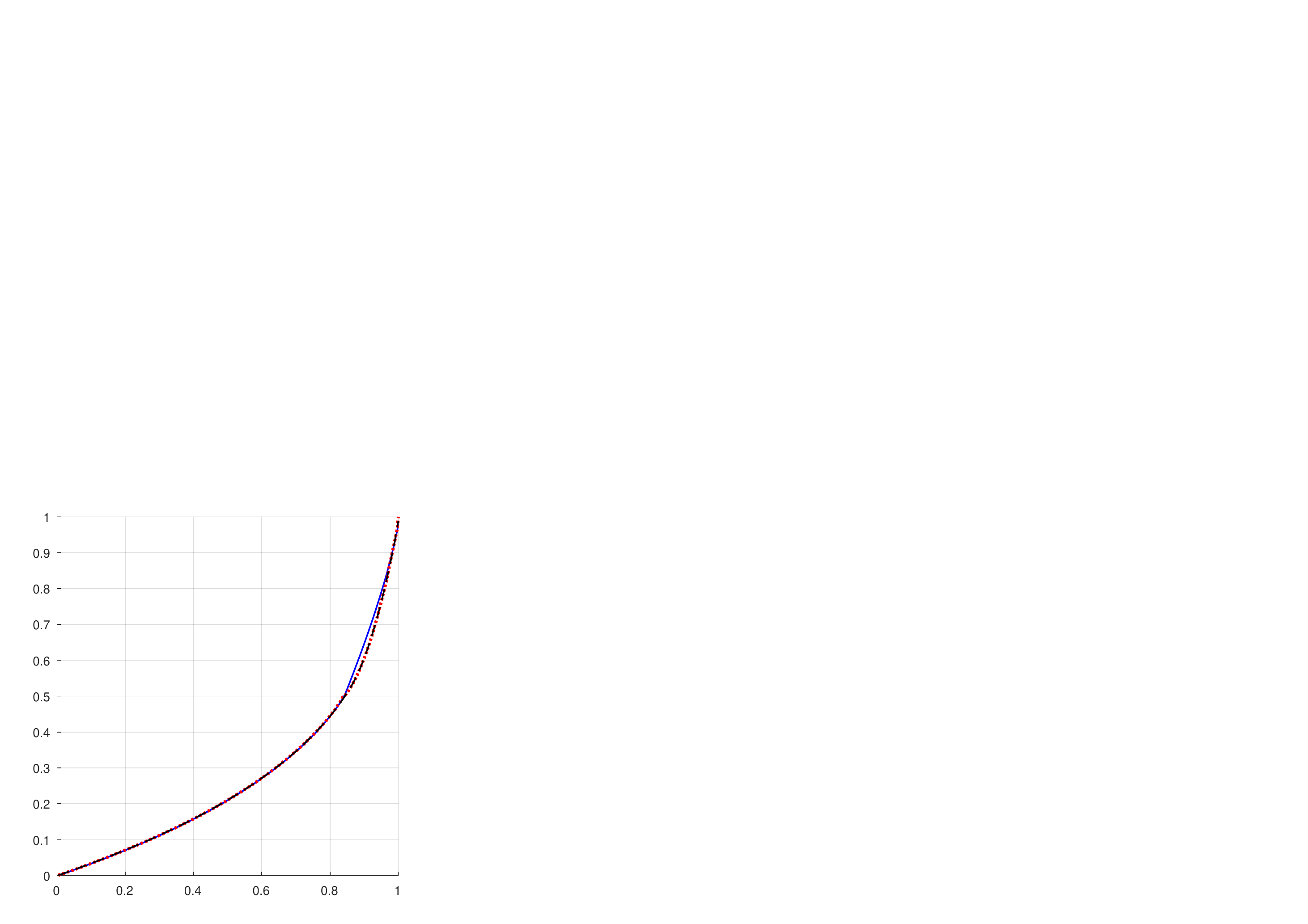}   
\put (-11,211) {$\oR_q(a)$}
\put (109,0) {$a$}
\end{overpic}
\caption{\label{ex3dim} The normalized IB function for the example state $\rho_{XY}^{(3)}$, on the left with $p=0.2$ and on the right with $p=0.4$, each with different allowed dimension for the system $W$. The blue line is $|W|=2$, red is $|W|=3$ and black is $|W|=4$. The red and black line seem to be identical in both cases.}
\end{figure}

For the numerical examples, we use an improved version of the algorithm described in the 
supplemental material of~\cite{salek2017}. We implemented two main features that are not 
present in the original code:
\begin{itemize}
\item In the code provided along with~\cite{salek2017}, not only is $|X|=|Y|=2$ fixed, 
      but also $|W|=2$; we removed the latter restriction. 
\item We adapted the code to also evaluate the privacy funnel function. 
\end{itemize}
Let us first focus on the implications of the first point. \textit{A priori} the size of the system $W$ could be chosen arbitrarily big to aid the optimization. In the classical case it is known that choosing $|W| = |X|+2$ is always sufficient to reach the optimum~\cite{GNT03}. However, in the quantum setting, such a bound is not known and constitutes an important open problem. Now, we can easily give an example where this does in fact play a role. Consider the state
\begin{align}
\rho_{XY}^{(3)} = p \ket{v}\bra{v} + (1-p)\ket{w}\bra{w},
\end{align}
with $\ket{v}=\frac{1}{\sqrt{2}}(\ket{00}+\ket{11})$ and $\ket{w}=\ket{11}$ (this is also example state $\rho_{XY}^{(3)}$ in \cite{salek2017}). 

In Fig.~\ref{ex3dim} the IB function for the above state is given for $p=0.2$ and $p=0.4$. The blue line corresponds to the example given in \cite{salek2017} with $|W|=2$. One can directly see that lower values can be achieved for $|W|>2$, but it also suggests that nothing can be gained from choosing $|W|>3$ (for this particular example). 

Furthermore, this new example suggests that the non-differentiability at $a=0.5$ 
observed in~\cite{salek2017} is rather a result of the $W$ system being too small. 
Interestingly, for a different example (state $\rho_{XY}^{(2)}$ in \cite{salek2017})
we do not find an advantage up to $|W|=16$ and the non-differentiable point remains. 
It might however be possible that $|W|$ simply needs to be chosen to be even larger.

Utilizing the second improvement in the algorithm, we can also plot the privacy funnel (PF) 
function and compare it with the IB function. Again we use a normalized function:
\begin{align}\label{nq-PF}
\oP_q(a) &= \sup_{\cN^{X\to W}\atop{\frac{I(Y;W)_\rho}{I(Y;X)_\rho} \leq a}} \frac{I(X';W)_\tau}{I(X';X)_\tau}, \quad \text{for} \ a\geq 0.
\end{align}
An example for $\rho_{XY}^{(3)}$ with $p=0.4$ can be seen in Fig.~\ref{IBPF}. 

Another simple example is that of a pure state $\rho_{XY}=\psi_{XY}$. Here it is clear that the purification $\tau_{X'X}$ is equivalent to $\psi_{XY}$ up to an isometry on the $Y$ system. Since this isometry commutes with the channel $\cN$, we immediately obtain $I(Y;W)_\rho = I(X';W)_\tau$  and $I(Y;X)_\rho = I(X';X)_\tau$, and therefore $\oR_q(a) = \oP_q(a) =a$. 

 \begin{figure}[t!]
\centering
\begin{overpic}[trim=4.8cm 10.2cm 3cm 10.2cm, clip, scale=0.8]{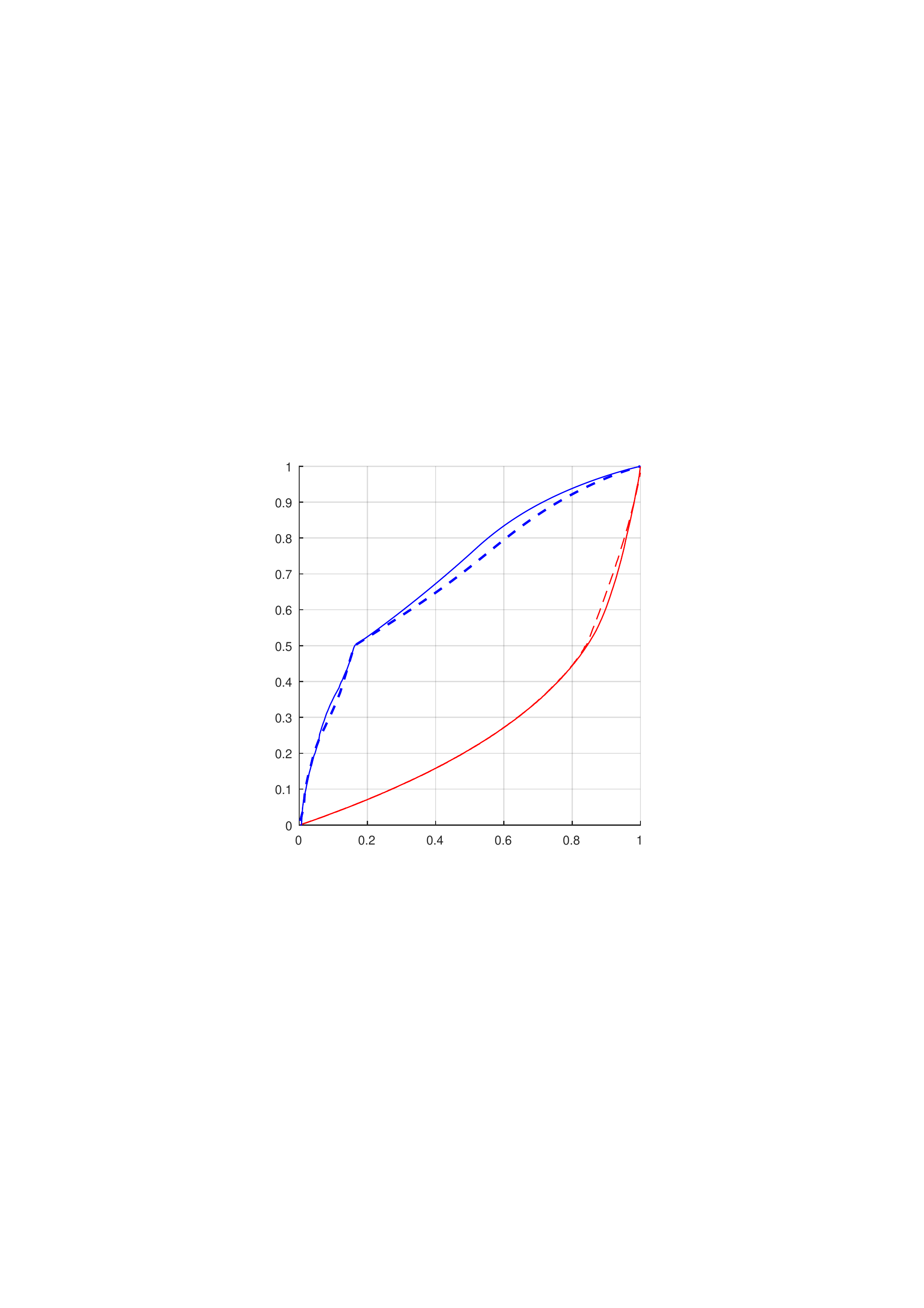}   
\put (15,213) {$\oR_q(a)$ and $\oP_q(a)$}
\put (170,0) {$a$}
\end{overpic}
\caption{\label{IBPF} The normalized IB and privacy funnel functions for $\rho_{XY}^{(3)}$ with $p=0.4$. The IB function is red, the privacy funnel function is blue. The solid lines use $|W|=3$ and the dashed ones $|W|=2$.}
\end{figure}

As a final example, we consider the case where $\rho_{XY}$ is a classical state, {\em{i.e.}}~
\begin{align}
\rho_{XY} = \sum_{x,y} p(x,y) |x\rangle\langle x| \otimes |y\rangle\langle y|
\end{align}
where $\{|x\rangle\}$ and $\{|y\rangle\}$ denote orthonormal bases associated with the
systems $X$ and $Y$, respectively. In~\cite{Grimsmo16} it was shown that for this particular case a classical system $W$ is sufficient for achieving the optimum in the quantum IB function. It follows that the quantum IB function reduces to the classical IB function when the initial state $\rho_{XY}$ is classical. 
This allows us to verify the numerical algorithm we are using to compute the quantum IB function by the following example: 
Consider $X$ and $Y$ to be two binary random variables, with $X$ having a uniform distribution, and $Y$ resulting from the action of a binary symmetric channel, with crossover probability $\delta$, on $X$. This example comes with one particular advantage, that is we can give an analytical expression for the classical IB function. In~\cite{WW1975} it was shown that in this case the following holds: 
\begin{align}
F(a) = h( h^{-1}(a) \star \delta), 
\end{align}
where $F(a)$ is defined in Eq.~\eqref{Eq:condEntrOpt}, $h(x)$ is the binary entropy and $\star$ denotes the binary convolution. This is an important example as it plays a crucial role in the theory of classical and quantum  information combining~\cite{WZ73,hirche18}. 
Now, using the reasoning after Eq.~\eqref{Eq:condEntrOpt}, one can easily get an expression for the classical IB function to which we apply the same normalization as for the quantum function, denoting the result as $\oR(a)$. Plotting values of the classical IB function obtained analytically,
and the values of the quantum IB function obtained numerically, results in Fig.~\ref{classical} and also serves to verify the used numerical algorithm. 

The classical example furthermore exhibits an interesting behavior. Namely, we observe that $\oR_q(1)=\frac{1}{2}$, which turns out to be the same for all classical states $\rho_{XY}$. In~\cite{salek2017} it was suggested that this can be understood in terms of \textit{quantum teleportation}. However, this property should rather be understood as an artifact of the normalization we used when defining $\oR_q(a)$ and the fact that the system $W$ can be chosen to be classical. For the states defined in Eq.~\eqref{taun}, note that $I(X';X)_\tau = 2 H(X)_\tau$ as $\tau$ is a pure state. On the other hand, we can write $I(X';W)_{\tilde\tau} = H(X')_{\tilde\tau} - H(X'|W)_{\tilde\tau}$ and since the conditional entropy is always positive for classical states (but not necessarily for quantum states), we get for this case $I(X';W)_{\tilde\tau} \leq H(X')_{\tilde\tau} = H(X)_\tau$. Applying both to Eq.~\eqref{nq-IBM} we get that for classical states $\rho_{XY}$ we have $\oR_q(a)\leq\frac{1}{2}$.

Note, that, with the same reasoning, $\oR_q(a)\leq\frac{1}{2}$ also holds for quantum states $\rho_{XY}$ when we restrict $W$ to be a classical system (as was previously obtained in~\cite{Grimsmo16}, by considering the unnormalized function). 

\begin{figure}[t!]
\centering
\begin{overpic}[trim=4.35cm 10.2cm 3cm 10.2cm, clip, scale=0.75]{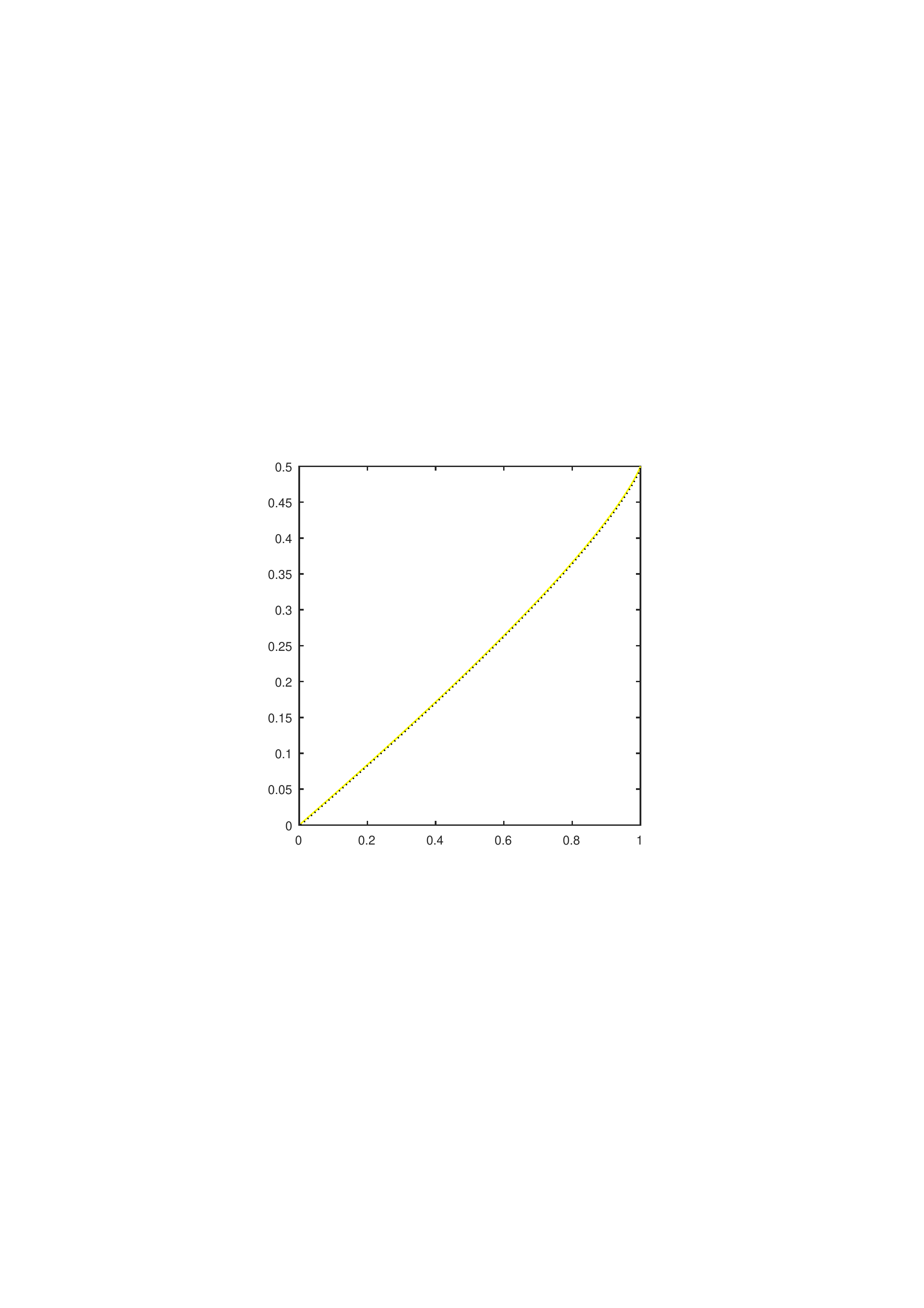}   
\put (15,206) {$\oR_q(a)$ and $\oR(a)$}
\put (170,-2) {$a$}
\end{overpic}
\caption{\label{classical} The normalized classical and quantum IB function for a classical state defined via a binary symmetric channel with $\delta=0.9$. The classical IB function is in yellow and the quantum IB function in black.}
\end{figure}


\section{Conclusions} 
\label{conclusion}
We have demonstrated the convexity of the quantum IB function, completing the 
proof of an operational interpretation for it proposed in~\cite{salek2017}. 
Furthermore we provided a different interpretation coming from source coding 
with side information at the decoder, via prior work by Hsieh and
Watanabe~\cite{HsiehWatanabe}. Along the way we gave an alternative 
formulation of the quantum IB function that might be useful for its 
further investigation. 
%
%

Nevertheless, many open problems remain. 
These include the question whether entanglement is at
all necessary in the source coding task, or if one can remove the 
requirement of exponentially limited amount of entanglement in the converse.

Some other questions are motivated by the properties we know of the classical IB function. 
For example, classically it is an easy consequence of Caratheodory's theorem
that the output dimension of the channel that we optimize over 
can be restricted to $|W| \leq |X|+2$, where $|X|$ is the 
dimension of the input system~\cite{GNT03}. Finding an analogue of this for 
the quantum case would be extremely useful for the evaluation of the quantum
IB function and for its practical application. 

Furthermore, considering the variety of applications of the classical IB function 
it would be interesting to see which of them translate to the quantum setting. 
Finally, the classical IB function is closely related to entropic bounds on 
information combining, and our results might help to better understand their 
quantum generalization~\cite{hirche18}.


\section*{Acknowledgments}
This paper resulted from discussions started at the {\em{Rocky Mountain Summit on
Quantum Information}} in Boulder, in June 2018. The authors thank
the organizers and JILA, University of Colorado, for hospitality.

The authors thank Mark Wilde, Min-Hsiu Hsieh and an anonymous
referee for pointing out that Theorem \ref{thmOptRate} had been found
previously by Hsieh and Watanabe \cite{HsiehWatanabe}.
ND is grateful to Eric Hanson for numerous insightful discussions on the
Information Bottleneck, and to Hao-Chung Cheng for pointing out some typos in 
an earlier version of the paper.

CH and AW acknowledge support from the Spanish MINECO, project
FIS2016-86681-P, with the support of FEDER funds; and
from the Generalitat de Catalunya, CIRIT projects 2014-SGR-966 and 2017-SGR-1127.
CH is in addition supported by FPI scholarship no.~BES-2014-068888.


\bibliographystyle{IEEEtran}


\vspace{0.5cm}

\appendix
\section{Proof of Theorem~\ref{thmOptRate}}

Here we give the complete proof of Theorem~\ref{thmOptRate}; 
it is essentially the proof found in \cite{HsiehWatanabe}, only that
we provide a bit more detail in some places, and that our converse 
proof is based on a general additivity statement, showing also
the additivity of the achievable rate region for a tensor product
of two arbitrary sources.

The following two well-known protocols will be employed in the
following to construct the achievability part of the proof.

\smallskip
\noindent
{(I)} {\em{Quantum Reverse Shannon Theorem (QRST), aka state splitting~\cite{devetak06, abeyesinghe09}:}} 

\noindent
Suppose two parties (say, Alice and Bob) are in different locations but share an 
unlimited amount of entanglement, and let $\cN^{A \to B}$ denote a quantum channel. 
With respect to an
input source $\rho_A^{\otimes n}$ of Alice, the output of the channel $\cN^{\otimes n}$ 
can be simulated at Bob's end, with asymptotically (in $n$) vanishing error, provided 
Alice sends qubits at the following rate to Bob: 
\begin{align}
 \frac{1}{2} I(B;R)_\omega,
\end{align}
where $\omega := {(\cN \otimes \id_R)\psi^\rho_{AR}}$, with $\psi^\rho_{AR}$ being a purification of $\rho_A$.

In more detail, this means the following:
Let ${\mathcal{U}}_{\cN}^{A \to BE}$ denote the Stinespring isometry of the quantum channel, with $E$ denoting the environment. Then, the initial state of the protocol is $\Psi_{A^nR^n} := \left(\psi^\rho_{AR}\right)^{\otimes n} \otimes \Phi_{T_AT_B}$, where by $\Phi_{T_AT_B}$ we denote the entangled state initially shared between Alice and Bob. After execution of the protocol, for $n$ large enough, the purifying reference system $R^n$ remains unchanged, while with asymptotically vanishing error, Alice receives the environment system $E^n$, while Bob receives the system $B^n$ of the state 
$\ket{\psi}_{B^nE^nR^n} := (\ket{\psi}_{BER})^{\otimes n}$, with
$\ket{\psi}_{BER} = {\mathcal{U}}_{\cN}^{A \to BE}|\psi^\rho\rangle_{AR}$.

\medskip
\noindent
{(II)} {\em{Fully Quantum Slepian-Wolf (FQSW), aka coherent state merging~\cite{abeyesinghe09}}:}

Suppose two distant parties (say Bob and Charlie) share the state $\sigma_{B^nC^n}\equiv \sigma_{BC}^{\otimes n}$. Consider its purification $\varphi_{B^nC^nR^n}\equiv \varphi_{BCR}^{\otimes n}$, with $\varphi_{BCR}$ being a purification of $\sigma_{BC}$. Then by implementing the FQSW protocol, Bob can transmit the state of his system $B^n$ (and also the entanglement initially shared between $B^n$ and $R^n$), with asymptotically vanishing error, to Charlie, by sending qubits at a rate
\[
  \frac{1}{2} I(B;R)_\varphi,
\]
to him.

\medskip
\begin{proof-of}[Theorem~\ref{thmOptRate}]
{\em{(Achievability):}} This part of the proof follows directly by applying the previously described QRST and FQSW protocols (see also Fig.~\ref{fig:quantum2}). Let $\rho_X := \tr_{YR} \psi_{XYR}$ and $\psi_{X^nY^nR^n}:= \psi_{XYR}^{\otimes n}$. For $n$ large enough, by using the shared entangled state $\Phi_{T_XT_C}$ and employing the QRST protocol, Alice and Charlie can simulate the output of $n$ independent uses of an encoding map ({\em{i.e.}} a quantum channel) $\cN^{X \to W}$, corresponding to an input 
$\rho_X^{\otimes n}$, with asymptotically vanishing error, provided Alice sends qubits to Charlie, via a system $C$ at a rate $\frac{1}{2}I(W;RY)_\omega$, where $\omega := (\cN^{X \to W} \otimes {\rm{id}}_{YR}) \psi_{XYR}$. 
Once Charlie receives the system $C$ from Alice, the composite system in his possession is 
$\overline{C}:= CT_C$. 

Next, by implementing the FQSW protocol on the tripartite state $\sigma_{R^nECT_CY^n}$, with $R^nE$, $CT_C$ and $Y^n$ being the three systems in the tripartition, Bob can transmit his system $Y^n$ to Charlie, with asymptotically vanishing error, by sending qubits to him at a rate $Q_Y=\frac{1}{2}I(Y;RE)_\sigma= \frac{1}{2}I(Y;RV)_\sigma$, where the equality follows since the system $E$ resulting from applying the FQSW protocol is identical to the purifying system $V$ of the simulated channel $\cN^{X \to W}$.

Minimizing over all possible encoding maps of Alice, yields the expression on the right hand side of \reff{eq-thm}, thus establishing it as an achievable rate of the specified task.
\medskip

{\em{(Converse/Optimality):}}
Consider the most general protocol under which Alice and Bob, by performing local operations and by sending qubits at a rate $Q_X$ and $Q_Y$, respectively, to Charlie, can 
transmit the states $C$, $T_C$ and $Y^n$ to him. 

We assume that there are noiseless quantum channels between Alice and Charlie, and Bob and Charlie. Alice's most general operation may be 
 decomposed into two steps: (i) she locally generates a maximally entangled state $\Phi_{T_XT_C}$ and sends $T_C$ to Charlie; at the
end of this step she has the systems $X^nT_X$, while Charlie has the system $T_C$; (ii) she then applies her encoding map, a CPTP map $\cE^{X^nT_X \to C}$, whose Stinespring isometry we denote as $\cU_{\cE}^{X^nT_X \to CE}$. Let 
\begin{align}\label{sigma}
  \sigma_{R^n E C T_C Y^n}:= \left(\cU_{\cE}^{X^nT_X \to CE}\!\!\otimes \id_{Y^nR^nT_C}\!\right) 
                              \!(\psi_{X^nY^nR^n} \!\otimes\!  \Phi_{T_XT_C}).
\end{align} 
In the above, $X^n = X_1X_2\ldots X_n$,  $Y^n = Y_1Y_2\ldots Y_n$ and  $R^n = R_1R_2\ldots R_n$. 

The rate at which Alice transmits qubits to Charlie
is $Q_X = \frac1n {\log |C|}$. Hence,
\begin{align}
nQ_X & \geq H(C)_\sigma,\nonumber\\
& \geq \frac{1}{2} I(C; Y^nR^nT_C)_\sigma, \nonumber\\
& \geq \frac{1}{2} I(C; Y^nR^n | T_C)_\sigma, \nonumber\\
 &= \frac{1}{2} I(C T_C; Y^nR^n)_\sigma,\nonumber\\
 &\equiv \frac{1}{2} I(\overline{C}; Y^nR^n)_\sigma,\label{qan}
\end{align}
where $\overline{C}= CT_C$; observe that the above reasoning allows us to integrate the steps (i)
 and (ii) of Alice's operation into a single CPTP
 map  $\overline{\cE}:X^n \longrightarrow \overline{C}$, acting as 
 $\overline{\cE}(\rho) = (\cE \otimes \id_{T_C})(\rho_{X^nY^n} \otimes \Phi_{T_XT_C})$, where $\rho_{X^nY^n}= \tr_{R^n} \psi_{X^nY^nR^n}$. Now, the first inequality follows because $H(C) \leq \log |C|$, the second inequality follows from the fact that for a pure state of a tripartite system $ABE$,
\begin{align}
H(A) = \frac{1}{2} I(A;B) + \frac{1}{2} I(A;E) \geq \frac{1}{2} I(A;B), \label{helpIneq}
\end{align}
the third inequality holds because of the chain rule
\begin{align}\label{chain}
I(A;BC) &= I(A;B) + I(A;C|B),
\end{align}
where for any tripartite state $\rho_{ABC}$, $I(A;C|B):= H(\rho_{AB}) + H(\rho_{BC}) - H(\rho_B) - H(\rho_{ABC})$ denotes the conditional mutual information. The equality follows because $T_C$ is uncorrelated with $Y^nR^n$.

Suppose it suffices for Bob to transmit qubits at a rate $Q_Y = \frac1n {\log |{\widetilde{C}}|}$ to Charlie. Let $\cU^{Y^n \to {\widetilde{C}}\widetilde{E}}$ be a unitary that Bob performs, and 
$$\omega_{R^nE{\widetilde{C}}\widetilde{E}} = (\id_{R^nE} \otimes \cU^{Y^n \to {\widetilde{C}}\widetilde{E}}) \sigma_{R^nEY^n}.$$ 
Then $\widetilde{E}$ must be decoupled from $R^nV$ in the asymptotic limit. This is because the fidelity criterion in Eq.~\eqref{fid} ensures that the final state $\omega_{R^nE\widehat{C}\widehat{T}_C\widehat{Y}^n}$ is close to a pure state. 
Indeed, the fidelity bound
$F\left(\sigma_{R^n E C T_C Y^n},\omega_{R^nE\widehat{C}\widehat{T}_C\widehat{Y}^n}\right)^2 \geq 1-\epsilon$
implies, by Uhlmann's theorem,
that there exists a state $\tau_{\widetilde{E}}$ such that
\[
  F\left(\sigma_{R^n E C T_C Y^n}\otimes\tau_{\widetilde{E}},
         \omega_{R^nE\widehat{C}\widehat{T}_C\widehat{Y}^n\widetilde{E}}\right)^2 \geq 1-\epsilon,
\]
which implies in particular that
\[
  F\left(\sigma_{R^n E}\otimes\tau_{\widetilde{E}},
         \omega_{R^n E \widetilde{E}}\right)^2 \geq 1-\epsilon.
\]
Using the well-known relations between fidelity and trace distance, 
this yields
\[
  \frac12 \left\|\sigma_{R^n E}\otimes\tau_{\widetilde{E}}
                 -\omega_{R^n E \widetilde{E}}\right\|_1 \leq \sqrt{\epsilon}.
\]
Hence, by the Alicki-Fannes inequality regarding the continuity of the
quantum conditional entropy~\cite{AlickiFannes,AWcon}, 
for any $\delta>0$ and for large enough $n$, 
\begin{align}\label{Fannes}
  \frac{1}{2} I(\widetilde{E}; R^nE)_\omega &\leq {n{\delta}}.
\end{align}
Namely,
\[\begin{split}
  I(\widetilde{E}; R^nE)_\omega 
     &=    I(\widetilde{E}; R^nE)_\omega-I(\widetilde{E}; R^nE)_{\sigma\otimes\tau} \\
     &\leq \bigl| S(\omega_{\widetilde{E}})-S(\tau_{\widetilde{E}}) \bigr|          \\
     &\phantom{===}
            + \bigl| S(\widetilde{E}|R^n E)_\omega - S(\widetilde{E}|R^n E)_{\sigma\otimes\tau} \bigr| \\
     &\leq 3\sqrt{\epsilon}\log|\widetilde{E}| 
            + 2(1+\sqrt{\epsilon})h\left(\frac{\sqrt{\epsilon}}{1+\sqrt{\epsilon}}\right),
\end{split}\]
where $h(x)=-x\log x - (1-x)\log(1-x)$ is the binary entropy.
To conclude this part of the argument, notice that 
w.l.o.g.~$|\widetilde{E}| \leq |Y|^n |\widetilde{C}| \leq (|Y|2^{Q_Y})^n$,
and so Eq.~(\ref{Fannes}) holds with 
$\delta = (3\log|Y|+3Q_Y)\sqrt{\epsilon}+\frac4n$, which can be made
arbitrarily small for sufficiently large $n$.

Since $Q_Y = \frac1n {\log |{\widetilde{C}}|}$, we have
\begin{align}
\label{qbn}
nQ_Y &\geq H({\widetilde{C}})_\omega,\nonumber\\
     &\geq \frac{1}{2} I({\widetilde{C}}; R^nE\widetilde{E})_\omega,\nonumber\\
     &\geq \frac{1}{2} I({\widetilde{C}}; R^nE|\widetilde{E})_\omega,\nonumber\\
     &=    \frac{1}{2} I({\widetilde{C}}\widetilde{E}; R^nE)_\omega 
            -\frac{1}{2}I(\widetilde{E};R^nE)_\omega\nonumber\\
     &\geq \frac{1}{2} I({Y^n}; R^nE)_\sigma - {n{\delta}},
\end{align}
where the first inequality follows because $H({\widetilde{C}}) \leq \log |{\widetilde{C}}|$, the second inequality follows again from Eq.~(\ref{helpIneq}), 
the third inequality and the equality follow from the chain rule \reff{chain} and the last inequality follows from the fact that the mutual information is invariant under unitaries 
(note that the states of ${\widetilde{C}}\widetilde{E}$ and $Y^n$ are
related by a unitary) and \reff{Fannes}.
\medskip

The final step is now to express the bounds (\ref{qan}) and (\ref{qbn}) on the rates $Q_X$ and $Q_Y$ by single-letter expressions. To this end we define the following set for the pure state $\psi\equiv \psi_{XYR}$:
\begin{equation}\begin{split}
\cT(\psi) &:= \bigl\{ (Q_X,Q_Y) : \exists\,\cU^{X \to WV} \,\text{isometry s.t. } \bigr. \\
          &\phantom{======}
                                   2Q_X \geq I(W;YR)_\sigma,\, 2Q_Y \geq I(Y;RV)_\sigma, \\
          &\phantom{======}
           \bigl. \sigma_{WVYR}:=(\cU^{X\to WV} \otimes \id_{YR}) \psi_{XYR} \bigr\}.
\end{split}\end{equation}
We show below that the set $\cT(\psi)$ satisfies an additivity property: For any two states ${\psi^{(i)}}_{X_iY_iR_i}$,
\begin{align}
\cT(\psi^{(1)}\otimes\psi^{(2)})  = \cT(\psi^{(1)}) + \cT(\psi^{(2)}), \label{Tadd}
\end{align}
where the $+$ on the r.h.s. refers to the Minkowski sum (\emph{i.e.}~element-wise sum) of two sets.
Suppose that $(Q_X,Q_Y) \in {\mathcal{T}}(\psi^{(1)}\otimes \psi^{(2)})$, where $\psi^{(1)}$ (resp.~$\psi^{(2)}$) is a pure state of a tripartite system $X_1Y_1R_1$ (resp.~$X_2Y_2R_2$). Alice possesses the systems $X_1$ and $X_2$, while Bob possesses $Y_1$ and $Y_2$; here $R_1$ and $R_2$ denote inaccessible, purifying reference systems. The final composite pure state, resulting from the action of a Stinespring isometry $\cU^{X_1X_2 \to WV}$, is then given by 
\begin{align}\label{wsigma}
\wsigma_{WVY_1Y_2R_1R_2} := \left(\cU^{X_1X_2 \to WV} \otimes \id_{Y_1Y_2 R_1R_2}\right) (\psi^{(1)}\otimes\psi^{(2)}).
\end{align}
The $\supseteq$ direction follows directly from the fact that $(i)$ the set of all isometries $\cU^{X_1X_2 \to WV}$ clearly also includes all those of the form $\cU^{X_1 \to W_1V_1}\otimes\cU^{X_2\to W_2V_2}$, with $\cU^{X_1 \to W_1V_1}$ and $\cU^{X_2\to W_2V_2}$ being isometries arising in the definitions of the sets ${\mathcal{T}}(\psi^{(1)})$ and ${\mathcal{T}}(\psi^{(2)})$, and $(ii)$ from the additivity of the mutual information. We will therefore concentrate on the $\subseteq$ direction. 

By assumption, and using the chain rule Eq.~(\ref{chain}), we obtain
\begin{align}
2Q_{X_1X_2} & \geq  I(W; Y_1Y_2R_1R_2)_{\wsigma},\nonumber\\[2pt]
&=  I(W ; Y_1R_1|Y_2R_2)_{\wsigma}  + I(W ; Y_2R_2)_{\wsigma}       \nonumber\\[2pt]
&= I( Y_1R_1 ;W Y_2R_2)_{\wsigma}  +  I(W ; Y_2R_2)_{\wsigma}       \nonumber\\[2pt]
&=: I(Y_1R_1; W_1 )_{\wsigma} + I( Y_2R_2; W_2)_{\wsigma}, \label{Eqn:boundQX}
\end{align}
with $W_1:=W Y_2R_2$ and $W_2:=W$.

Similarly, by using the chain rule (\ref{chain}) twice, the fact that the states of the systems $Y_1$ and $Y_2$ are uncorrelated (and hence
$I(Y_1;Y_2)_\sigma=0$) and the data-processing inequality (with respect to partial trace) we obtain
\begin{align}
2Q_{Y_1Y_2} &\geq I(Y_1Y_2; R_1R_2V)_{\wsigma},\nonumber\\[2pt]
&= I( Y_1; R_1R_2V)_{\wsigma} + I(Y_2; R_1R_2V|Y_1)_{\wsigma} \nonumber\\[2pt]
&=I( Y_1; R_1R_2V)_{\wsigma} + I(Y_2; R_1R_2VY_1)_{\wsigma} - I(Y_1;Y_2)_\sigma \nonumber\\[2pt]
&\geq I( Y_1; R_1V)_{\wsigma} + I(Y_2; R_1R_2VY_1)_{\wsigma}  \nonumber\\[2pt]
&=: I( Y_1; R_1V_1)_{\wsigma} + I(Y_2; R_2V_2)_{\wsigma},   \label{Eqn:boundQY}
\end{align}
with $V_1:=V$ and $V_2:=VR_1Y_1$.

The pure state $\wsigma$ of Eq.~(\ref{wsigma}) of the composite system $WVY_1Y_2R_1R_2$ results from the action of the isometry $\cU^{X_1X_2 \to WV}$ on $\psi^{(1)}\otimes\psi^{(2)}$. However, by the above definitions of the systems $W_1$, $W_2$, $V_1$ and $V_2$, it follows that one can construct two isometries $\mathcal{U}^{(i)}:X_i \rightarrow W_iV_i$, for $i=1,2$, which when acting solely on the pure state $\psi^{(1)}$ and $\psi^{(2)}$ respectively, yields pure states of this same composite system $WVY_1Y_2R_1R_2$. Let these resulting pure states be denoted as $\sigma^{(1)}$ and $\sigma^{(2)}$ respectively:
\begin{align*}
\sigma^{(1)}\equiv \sigma^{(1)}_{W_1V_1Y_1R_1} &=  (\cU^{(1)}\otimes \id_{Y_1R_1}) \psi_{X_1Y_1R_1},\\
\sigma^{(2)}\equiv \sigma^{(2)}_{W_2V_2Y_2R_2} &=  (\cU^{(2)}\otimes \id_{Y_2R_2}) \psi_{X_2Y_2R_2}.
\end{align*}
Then the bounds (\ref{Eqn:boundQX}) and (\ref{Eqn:boundQY}) can be rewritten as
\begin{align*}
2Q_{X_1X_2} &\geq I( Y_1R_1; W_1)_{\sigma^{(1)}} + I(Y_2R_2;W_2)_{\sigma^{(2)}},\\
2Q_{Y_1Y_2} &\geq I( Y_1; R_1V_1)_{\sigma^{(1)}} + I(Y_2; R_2V_2)_{\sigma^{(2)}}.
\end{align*} 

\vfill\pagebreak
However, by definition of the isometries $\cU^{(i)}$ for $i=1,2$, it 
follows that $I( Y_iR_i; W_i)_{\sigma^{(i)}}$ and $I( Y_i; R_iV_i)_{\sigma^{(i)}}$ 
are respectively valid lower bounds on $2Q_{X_i}$ and $2Q_{Y_i}$ for pairs 
$(Q_{X_i},Q_{Y_i})$ occurring in the sets $\cT(\psi^{(i)})$.

Hence, we have
\[
  \cT(\psi^{(1)}\otimes\psi^{(2)})  \subseteq \cT(\psi^{(1)}) + \cT(\psi^{(2)}),
\]
and we conclude the additivity property Eq.~\eqref{Tadd}. 
As an immediate implication, we get by induction, choosing 
$\psi^{(1)}=\psi$ and $\psi^{(2)}=\psi^{\otimes (n-i)}$ 
for $i\in\{1,\dots,n-1\}$, that
\begin{align}
  \cT(\psi^{\otimes n}) = n \cT(\psi).
\end{align}

It follows, returning to the bounds~\eqref{qan} and~\eqref{qbn}, 
that there is an isometry ${\mathcal U}:X \longrightarrow VW$ such that
 \begin{align}
   Q_X &\geq \frac12 I(W;YR)_\sigma,\\
   Q_Y &\geq \frac12 I(Y;RV)_\sigma-{{\delta}},
 \end{align}
where $\sigma_{WVYR}$ is as in the statement of the theorem.
Since $\delta$ becomes arbitrarily small for sufficiently large $n$, 
we obtain the desired bounds.
\end{proof-of}

\medskip
\begin{remark}
In the above proof we have used Eq.~\eqref{Tadd} only to show the
single-letterization of the rate region in Theorem \ref{thmOptRate}.
Since the result is that $\cT(\psi)$ is that rate region, the additivity
relation~\eqref{Tadd} shows that the rate region of a product of
two independent sources is the Minkowski sum of the individual rate
regions. 
\end{remark}

\end{document}